\newtheorem{example}{Example}
\newtheorem{definition}{Definition}
\newtheorem{theorem}{Theorem}
\newtheorem{remark}{Remark}
 \newtheorem{lemma}{Lemma}
  \newtheorem{corollary}{Corollary}
\DeclareMathAlphabet{\mathcal}{OMS}{cmsy}{m}{n}
\newcommand{\bigzono}[1]{\Big\langle #1 \Big\rangle}
\newcommand{\zono}[1]{\langle #1 \rangle}
\newcommand*\xor{\oplus}
\newcommand*\xnor{\odot}
\newcommand*\xorsum[2]{\overset{#2}{\underset{#1}{\xor}}}
\DeclareRobustCommand{\nand}{\mathbin{\mathpalette\n@and@or\land}}
\DeclareRobustCommand{\nor}{\mathbin{\mathpalette\n@and@or\lor}}
\newcommand{\n@and@or}[2]{%
  \vphantom{#2}%
  \ooalign{$\m@th#1#2$\cr\hidewidth$\m@th#1\sim$\hidewidth\cr}%
}
\begin{document}
\title{\LARGE \bf Logical Zonotopes: A Set Representation for the \\ Formal Verification of Boolean Functions}

\author{Amr Alanwar$^{1,2}$, Frank J. Jiang$^{3}$, Samy Amin$^{2}$, and Karl H. Johansson$^{3}$
\vspace*{-5mm}
\thanks{This work is supported by the Knut and Alice Wallenberg Foundation, the Swedish Strategic Research Foundation, the Swedish Research Council, and the  Wallenberg AI, Autonomous Systems and Software Program (WASP) funded by the Knut and Alice Wallenberg Foundation.}
\thanks{$^{1}$School of Computation, Information and Technology, Technical University of Munich. {\tt\small alanwar@tum.de}.}
\thanks{$^{2}$School of Computer Science and Engineering, Constructor University. {\tt\small \{aalanwar,samin\}@constructor.university}.}%
\thanks{$^{3}$School of Electrical Engineering and Computer Science, KTH Royal Institute of Technology. Authors are affiliated with Digital Futures.  {\tt\small\{frankji, kallej\}@kth.se.}}%
}




\maketitle
\begin{abstract}
A logical zonotope, which is a new set representation for binary vectors, is introduced in this paper. A logical zonotope is constructed by XOR-ing a binary vector with a combination of other binary vectors called generators. Such a zonotope can represent up to $2^\gamma$ binary vectors using only $\gamma$ generators. It is shown that logical operations over sets of binary vectors can be performed on the zonotopes' generators and, thus, significantly reduce the computational complexity of various logical operations (e.g., XOR, NAND, AND, OR, and semi-tensor products). Similar to traditional zonotopes' role in the formal verification of dynamical systems over real vector spaces, logical zonotopes can efficiently analyze discrete dynamical systems defined over binary vector spaces. We illustrate the approach and its ability to reduce the computational complexity in two use cases: (1) encryption key discovery of a linear feedback shift register and (2) safety verification of a road traffic intersection protocol.
\end{abstract}

\section{Introduction}
For several decades, logical systems have been used to model complex behaviors in numerous applications. By modeling a system as a collection of logical functions operating in a binary vector space, we can design models that consist of relatively simple dynamics but still capture a complex system's behavior at a sufficient level of abstraction. Some popular approaches to modeling logical systems are finite automatons, Petri nets, or Boolean Networks (BNs). 

%
%

An important form of analysis for logical systems is reachability analysis. Reachability analysis allows us to formally verify the behavior of logical systems and provide guarantees that, for example, the system will not enter into undesired states. One of the primary challenges of reachability analysis is the need to exhaustively explore the system's state space, which grows exponentially with the number of state variables. To avoid exponential computational complexity, many reachability analysis algorithms are based on a representation called Binary Decision Diagram (BDD). Given a proper variable ordering, BDDs can evaluate Boolean functions with linear complexity in the number of variables~\cite{conf:BDDthesis}.
While BDDs play a crucial role in verification, they have well-known drawbacks, such as requiring an externally supplied variable ordering~\cite{conf:npcomplete, conf:effreachBDD}.
Outside of BDDs, there are also approaches to reachability analysis for logical systems modeled as BNs, or Boolean Control Networks (BCNs) for systems with control inputs, that rely on the semi-tensor product~\cite{7454743}. However, due to being point-wise and scaling limitations of semi-tensor products, BCN-based approaches become intractable for high-dimensional logical systems~\cite{leifeld2019overview}.
In this work, we propose a novel zonotope representation that significantly reduces the exponential computational complexity of reachability analysis. 

\begin{figure}
    \centering
    \includegraphics[scale=0.28]{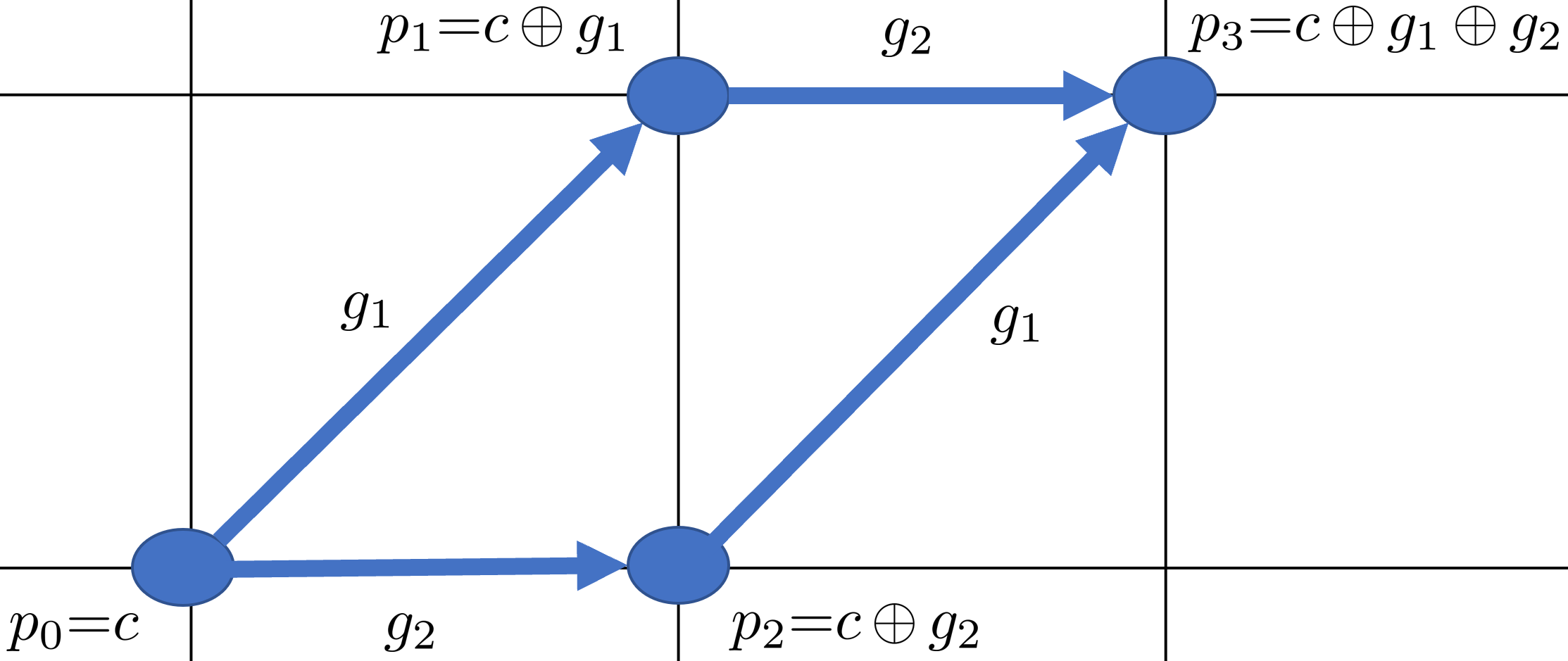}
    \caption{Representing four points $p_0,\dots,p_3$ by considering all combinations of turning off and on the two generators $g_{1}$ and $g_{2}$ of a logical zonotope.}
    \label{fig:LogicalZono}
    \vspace{-6mm}
\end{figure}

In real vector spaces, zonotopes already play an important role in the reachability analysis of dynamical systems~\cite{girard, conf:thesisalthoff}. Classical zonotopes are constructed by taking the Minkowski sum of a real vector center and a combination of real vector generators. Through this construction, a set of infinite real vectors can be represented by a finite number of generators. Then, by leveraging the fact that the Minkowski sum of two classical zonotopes can be computed by combining their respective generators, researchers have formulated computationally efficient approaches to reachability analysis~\cite{conf:thesisalthoff}. In this work, we take inspiration from classical zonotopes and formulate logical zonotopes. Similarly, logical zonotopes are constructed by XOR-ing a binary vector center and a combination of binary vector generators. In binary vector spaces, logical zonotopes are able to represent up to $2^\gamma$ binary vectors using only $\gamma$ generators, as illustrated in Figure~\ref{fig:LogicalZono} with $\gamma=2$. Moreover, we show that any logical operation on the generators of the logical zonotopes is either equivalent to or over-approximates the explicit application of the logical operation to each member of the represented set. Based on these results, we formulate our logical zonotope-based reachability analysis for logical systems.

Explicitly, the contributions of this work is summarized by the following:
(1) we present our formulation of logical zonotopes,
(2) we detail the application of logical operations and forward reachability analysis to logical zonotopes,
(3) we illustrate the use of logical zonotopes in two different applications.
To recreate our results, readers can use our publicly available logical zonotope library\footnotemark.

\footnotetext{\href{https://github.com/aalanwar/Logical-Zonotope}{https://github.com/aalanwar/Logical-Zonotope}}

The remainder of the paper is organized as follows. In Section~\ref{sec:prelim}, we introduce the notation and preliminary definitions. In Section~\ref{sec:approach}, we formulate logical zonotopes and overview the supported operations. Then, in Section~\ref{sec:eval}, we illustrate the applications of logical zonotopes. Finally, in Section~\ref{sec:con}, we conclude the work.

\section{Notation and Preliminaries}\label{sec:prelim}



In this section, we introduce details about the notation used throughout this work and preliminary definitions for logical systems and reachability analysis.

\subsection{Notation}
 The set of natural and real numbers are denoted by $\mathbb{N}$ and $\mathbb{R}$, respectively. 
 We denote the binary set $\{0,1\}$ by $\mathbb{B}$. The XOR, NOT, OR, and AND operations are denoted by $\xor,\neg,\lor$, and $\land$, respectively. Throughout the rest of the work, with a slight abuse of notation, we omit the $\land$ from $a \land b$ and write $a \, b$ instead. The NAND, NOR, and XNOR are denoted by $\nand,\nor$, and $\xnor$, respectively. Later, we use the same notation for both the classical and Minkowski logical operators, as it will be clear when the operation is taken between sets or individual vectors. Like the classical AND operator, we will also omit the Minkowski AND to simplify the presentation. Matrices are denoted by uppercase letters, e.g., $G \in \mathbb{B}^{n \times k}$, and sets by uppercase calligraphic letters, e.g., $\mathcal{Z} \subset \mathbb{B}^{n}$. Vectors and scalars are denoted by lowercase letters, e.g., $b \in \mathbb{B}^{n }$ with elements $b_{1:n}$. The identity matrix of size $n \times n$ is denoted $I_n$. We denote the Kronecker product by $\otimes$. The $x \in \mathbb{B}^{n}$ is an $n \times 1$ binary vector. 

\subsection{Preliminaries}
For this work, we consider a system with a logical function $f: \mathbb{B}^{n_x} \times \mathbb{B}^{n_u} \rightarrow \mathbb{B}^{n_x}$:
\begin{align}
    x(k+1) = f\big(x(k),u(k)\big)
\label{eq:feq}
\end{align}
where $x(k) \in \mathbb{B}^{n_x}$ is the state and $u(k) \in \mathbb{B}^{n_u}$ is the control input. The logical function $f$ can consist of any combination of $\xor,\neg,\lor,\nand,\nor,\xnor,$ and $\land$. We will represent sets of states and inputs for~\eqref{eq:feq} using logical zonotopes. As will be shown, logical zonotopes are constructed using an Minkowski XOR operation, which we define as follows.

\begin{definition}\label{def:semitensor}\textbf{(Minkowski XOR)} \
Given two sets $\mathcal{L}_1$ and $\mathcal{L}_2$ of binary vectors, the Minkowski XOR is defined between every two points in the two sets as
\begin{align}
      \mathcal{L}_1 \xor \mathcal{L}_2 &= \{z_1 \xor z_2| z_1\in \mathcal{L}_1, z_2 \in \mathcal{L}_2 \}. \label{eq:xor} 
\end{align}
\end{definition}
Similarly, we define the Minkowski NOT, OR and AND operations as follows.
\begin{align}
\neg \mathcal{L}_1  	&= \{ \neg z_1| z_1\in \mathcal{L}_1 \},  \\
\mathcal{L}_1\lor  \mathcal{L}_2 	&= \{z_1 \lor z_2| z_1\in \mathcal{L}_1, z_2 \in \mathcal{L}_2 \}, \\
\mathcal{L}_1  \mathcal{L}_2 	&= \{z_1  z_2| z_1\in \mathcal{L}_1, z_2 \in \mathcal{L}_2 \}.\label{eq:and}
\end{align}

We aim to show how logical zonotopes can be used to compute the forward reachable sets of systems defined by~\eqref{eq:feq}. We define the reachable sets of system~\eqref{eq:feq} by the following definition.

\begin{definition}\label{def:exactreachF}(\textbf{Exact Reachable Set}) \
Given a set of initial states $\mathcal{X}_0 \subset \mathbb{B}^{n_x}$ and a set of possible inputs $\,\mathcal{U}_k \subset \mathbb{B}^{n_u}$, the exact reachable set $\mathcal{R}_{N}$ of \eqref{eq:feq} after $N$ steps is
\begin{align*}
    \mathcal{R}_{N} = \big\{ &x(N) \in \mathbb{B}^{n_x} \; \big| \; \forall k \in \{0,...,N-1\}: \\
        & x(k+1) = f\big(x(k),u(k)\big), 
        \; x(0) \in \mathcal{X}_0, u(k) \in \mathcal{U}_k \big\}.
\end{align*}
\end{definition}



Another commonly used operator for BCNs is the semi-tensor product~\cite{conf:tensorproductsurvey}. Since semi-tensor products are useful in many applications, we have extended the classical definition to logical zonotopes. The classical definition for semi-tensor products is as follows.

\begin{definition}\label{def:semitensor}\textbf{(Semi-Tensor Product \cite{conf:semitensor})} \
Given two matrices $M \in \mathbb{B}^{m \times n}$ and $N  \in \mathbb{B}^{p \times q}$, the semi-tensor
product, denoted by $\ltimes$, is computed as:
\begin{align}
    M \ltimes N = (M \otimes I_{s_1})( N \otimes I_{s_2}),
\end{align}
where $s_1 =s/n$, and $s_2 =s/p$, with $s$ being the least common multiple of $n$ and $p$.
\end{definition}

\section{Logical Zonotopes}\label{sec:approach}
In this section, we present logical zonotopes and overview several different aspects of their use. We start by defining the set representation of logical zonotopes. Then, we go through the application of Minkowski XOR, NOT, XNOR, AND, NAND, OR, and NOR on logical zonotopes. 
Using these results, we show that when using logical zonotopes for reachability analysis on~\eqref{eq:feq}, we are able to compute reachable sets that over-approximate the exact reachable sets. Finally, we present an algorithm for reducing the number of generators of a logical zonotope.

\subsection{Set Representation}
Inspired by the classical zonotopic set representation which is defined in real vector space \cite{conf:zono1998}, we propose logical zonotopes as a set representation for binary vectors. We define logical zonotopes as follows.
\begin{definition}(\textbf{Logical Zonotope}) \label{df:zono}
Given a point $c {\in} \mathbb{B}^{n}$ and $\gamma {\in} \mathbb{N}$ generator vectors in a generator matrix $G{=}\begin{bmatrix} g_{1},  {\dots} ,g_{\gamma}\end{bmatrix}$ $\in \mathbb{B}^{n \times \gamma}$, a logical zonotope is defined as
\begin{align*}
\mathcal{L} = \Big\{ x \in \mathbb{B}^n \; \Big| \; x = c \xorsum{i=1}{\gamma}  g_{i} \beta_{i}, \, \beta_{i} \in \{0,1\} \Big\} \, .
\end{align*}
We use the shorthand notation $\mathcal{L} = \zono{c,G}$ for a logical zonotope. 
\end{definition}


\begin{example}
Consider a logical zonotope 
$$\mathcal{L}=\bigzono{\begin{bmatrix} 0 \\1 \end{bmatrix},\begin{bmatrix} 1 & 1 \\0 &1 \end{bmatrix}}.$$ With two generators, It represents the following four points:
\begin{align*}
    \begin{bmatrix}
    0 \\ 1
    \end{bmatrix},\begin{bmatrix}
    1 \\ 0
    \end{bmatrix},\begin{bmatrix}
    1 \\ 1
    \end{bmatrix},\begin{bmatrix}
    0 \\ 0
    \end{bmatrix}
\end{align*}
by iterating over all possible binary values of $\beta \in \{00,01,10,11\}$.
\end{example}

\begin{remark}
Logical zonotopes are defined over $\mathbb{B}^n$ and are different from zonotopes \cite{conf:zono1998}, constrained zonotopes \cite{conf:constrainedzono}, and hybrid zonotopes~\cite{conf:hybridzono} which are all defined over real vector space $\mathbb{R}^n$. Specifically, logical zonotopes are functional sets with Boolean symbols~\cite{Combastel2022}.
\end{remark}

Logical zonotopes $\mathcal{L}$ can enclose up to $2^{\gamma}$ binary vectors with $\gamma$ generators. In the following section, we will show that due to their construction, we can apply logical operations to a set of up to $2^{\gamma}$ binary vectors with a reduced computational complexity. 

\subsection{Minkowski Logical Operations}
Given two sets of binary vectors, we often need to perform logical operations between the members of the two sets. In order to perform these logical operations efficiently, we define new logical operations that only operate on the generators of logical zonotopes instead of the members contained within the zonotopes. We will go through each logical operation and show that when applied to logical zonotopes, they either yield exact solutions or over-approximations. 

\subsubsection{Minkowski XOR ($\xor$)} \,
Given the nature of the logical zonotope construction, we are able to compute the Minkowski XOR exactly and show that logical zonotopes are closed under Minkowski XOR as follows.

\begin{lemma}
\label{lem:xor}
Given two logical zonotopes $\mathcal{L}_1=\langle c_{1},G_{1} \rangle$ and $\mathcal{L}_2=\langle c_{2},G_{2} \rangle$, the Minkowski XOR is computed exactly as: 
\begin{align}
      \mathcal{L}_1 \xor \mathcal{L}_2  &= \Big\langle c_{1} \xor c_{2}, \begin{bmatrix} G_{1} , G_{2} \end{bmatrix} \Big\rangle.
     \label{eq:xormink}
\end{align}
\end{lemma}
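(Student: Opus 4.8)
The plan is to prove the set equality in~\eqref{eq:xormink} by double inclusion, leaning entirely on the fact that $\xor$ is associative and commutative on $\B^n$ (it is componentwise addition modulo two, so $(\B^n,\xor)$ is an abelian group). Write the columns of $G_1$ and $G_2$ as $g^{(1)}_{1},\dots,g^{(1)}_{\gamma_1}$ and $g^{(2)}_{1},\dots,g^{(2)}_{\gamma_2}$. First I would take an arbitrary element of the left-hand side. By the definition of Minkowski XOR in~\eqref{eq:xor}, any such element has the form $z_1 \xor z_2$ with $z_1 \in \mathcal{L}_1$ and $z_2 \in \mathcal{L}_2$, so by Definition~\ref{df:zono} there exist binary factors with $z_1 = c_1 \xor \xorsum{i=1}{\gamma_1} g^{(1)}_{i}\beta^{(1)}_{i}$ and $z_2 = c_2 \xor \xorsum{j=1}{\gamma_2} g^{(2)}_{j}\beta^{(2)}_{j}$.

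Regrouping by commutativity and associativity gives $z_1 \xor z_2 = (c_1 \xor c_2) \xor \xorsum{i=1}{\gamma_1} g^{(1)}_{i}\beta^{(1)}_{i} \xor \xorsum{j=1}{\gamma_2} g^{(2)}_{j}\beta^{(2)}_{j}$, which is precisely a point of $\bigzono{c_1 \xor c_2,\, \begin{bmatrix} G_1 , G_2 \end{bmatrix}}$ generated by the concatenated factor vector $(\beta^{(1)}_{1},\dots,\beta^{(1)}_{\gamma_1},\beta^{(2)}_{1},\dots,\beta^{(2)}_{\gamma_2})$. This settles the $\subseteq$ direction. For the reverse inclusion I would start from an arbitrary point $x$ of $\bigzono{c_1 \xor c_2,\, \begin{bmatrix} G_1 , G_2 \end{bmatrix}}$, whose factor vector has $\gamma_1+\gamma_2$ entries, split that vector into its first $\gamma_1$ and last $\gamma_2$ coordinates, and use the two halves to define points $z_1 \in \mathcal{L}_1$ and $z_2 \in \mathcal{L}_2$; the same regrouping then yields $x = z_1 \xor z_2 \in \mathcal{L}_1 \xor \mathcal{L}_2$.

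The step I expect to carry the real weight is not any computation but the observation that the factors $\beta^{(1)}_{i}$ and $\beta^{(2)}_{j}$ range over $\{0,1\}$ independently, so the map from pairs $(\beta^{(1)},\beta^{(2)})$ to concatenated factor vectors is a bijection. This is exactly what makes the Minkowski XOR \emph{exact} rather than an over-approximation: because $\xor$ is a group operation, the generators of the two zonotopes never interact to create cross terms, and conversely every generated vector of the right-hand side arises from a genuine pair $(z_1,z_2)$. It is worth flagging that this clean factorization is precisely what will fail for the AND, OR, and their negations treated later in the section, which is why only over-approximations are obtained there.
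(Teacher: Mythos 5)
Your proof is correct and follows essentially the same route as the paper's: double inclusion, using associativity and commutativity of $\xor$ to regroup $z_1 \xor z_2$ around the center $c_1 \xor c_2$ with the concatenated factor vector, and conversely partitioning the factor vector of a point in $\bigzono{c_1 \xor c_2, \begin{bmatrix} G_1 , G_2\end{bmatrix}}$ into its first $\gamma_1$ and last $\gamma_2$ entries to recover $z_1 \in \mathcal{L}_1$ and $z_2 \in \mathcal{L}_2$. Your closing remark on the independence of the factors and the absence of cross terms is a useful gloss on why exactness holds here but fails for the Minkowski AND, though it is not needed for the argument itself.
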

\begin{proof}
Let's denote the right hand side of \eqref{eq:xormink} by $\mathcal{L}_{\xor}$. We aim to prove that $\mathcal{L}_1 \xor \mathcal{L}_2 \subseteq \mathcal{L}_{\xor}$ and $\mathcal{L}_{\xor} \subseteq \mathcal{L}_1 \xor \mathcal{L}_2$. Choose any $z_1 \in \mathcal{L}_1$ and $z_2 \in \mathcal{L}_2$ 
\begin{align*}
 \exists \hat{\beta}_{1} &: z_1 = c_{1} \xorsum{i=1}{\gamma_{1}}  g_{1,i} \hat{\beta}_{1,i} \, ,\\
  \exists \hat{\beta}_{2} &: z_2 = c_{2} \xorsum{i=1}{\gamma_{2}}  g_{2,i} \hat{\beta}_{2,i}\, .
 \end{align*}
 Let $\hat{\beta}_{\xor,1:\gamma_{\xor}} {=} \begin{bmatrix} \hat{\beta}_{1,1:\gamma_{1}}\,, \hat{\beta}_{2,1:\gamma_{2}} \end{bmatrix}$ with $\gamma_{\xor} {=} \gamma_{1} {+} \gamma_{2}$. Given that XOR is an associative and commutative gate, we have the following:
\begin{align*}
z_1 \xor z_2 &= c_{1} \xorsum{i=1}{\gamma_{1}}  g_{1,i} \hat{\beta}_{1,i} \xor  c_{2} \xorsum{i=1}{ \gamma_{2}}  g_{2,i} \hat{\beta}_{2,i}\\
& = c_{\xor} \xorsum{i=1}{\gamma_{1}+\gamma_{2}} g_{\xor,i} \hat{\beta}_{\xor,i}\, ,
\end{align*}
where $c_{\xor} = c_{1} \xor c_{2}$ and  $G_{\xor} = \begin{bmatrix} G_{1}\,,\, G_{2} \end{bmatrix}$ with $G_{\xor}{=}\big[ g_{\xor,1},$  ${\dots} ,g_{\xor,\gamma_{\xor}}\big]$. Thus, $z_1 \xor z_2 \in \mathcal{L}_{\xor}$ and therefore $\mathcal{L}_1 \xor \mathcal{L}_2 \subseteq \mathcal{L}_{\xor}$. Conversely, let $z_{\xor} \in \mathcal{L}_{\xor}$, then 
 \begin{align*}
 \exists \hat{\beta}_{\xor} &: z_{\xor} = c_{\xor} \xorsum{i=1}{\gamma_{\xor}}  g_{\xor,i} \hat{\beta}_{\xor,i} \, .
 \end{align*}
  Partitioning $\hat{\beta}_{\xor,1:\gamma_{\xor}}=\begin{bmatrix}\hat{\beta}_{1,1:\gamma_{1}}\,,\, \hat{\beta}_{2,1:\gamma_{2}}\end{bmatrix}$, it follows that there exist $z_1 \in \mathcal{L}_1$ and $z_2 \in \mathcal{L}_2$ such that $z_{\xor} = z_1 \xor z_2$. Therefore, $z_{\xor} \in \mathcal{L}_1 \xor \mathcal{L}_2$ and $ \mathcal{L}_{\xor} \subseteq \mathcal{L}_1  \xor \mathcal{L}_2$.
\end{proof}  
  
%
\subsubsection{Minkowski NOT ($\neg$), and XNOR ($\xnor$)} 

Given that we are able to do Minkowski XOR operation between logical zonotopes, we will able to find the Minkowski NOT with the following:

\begin{corollary}
\label{col:not}
Given a logical zonotope $\mathcal{L}=\langle c,G \rangle$, the Minkowski NOT can be computed exactly as: 
\begin{align}
     \neg \mathcal{L} &= \langle c \xor 1 ,G \rangle.
     \label{eq:negmink}
\end{align}
\end{corollary}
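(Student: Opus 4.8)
The plan is to reduce the Minkowski NOT to a Minkowski XOR against a singleton set and then invoke Lemma~\ref{lem:xor}. The key identity is that, for any binary vector, the bitwise complement coincides with an XOR against the all-ones vector: componentwise $\neg z_j = z_j \xor 1$, since $0 \xor 1 = 1$ and $1 \xor 1 = 0$. Writing $1$ for the all-ones vector in $\mathbb{B}^n$, this gives $\neg z = z \xor 1$ for every $z \in \mathbb{B}^n$.

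First I would rewrite the Minkowski NOT using this identity. By the definition of the Minkowski NOT, $\neg \mathcal{L} = \{ \neg z \mid z \in \mathcal{L}\} = \{ z \xor 1 \mid z \in \mathcal{L}\}$. The right-hand side is exactly the Minkowski XOR of $\mathcal{L}$ with the singleton set $\{1\}$, i.e., $\neg \mathcal{L} = \mathcal{L} \xor \{1\}$.

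Next I would express the singleton $\{1\}$ as a logical zonotope so that Lemma~\ref{lem:xor} applies. Taking center $1$ and an empty generator matrix, Definition~\ref{df:zono} yields $\langle 1, [\,] \rangle = \{1\}$, since there are no $\beta_i$ to vary. Applying Lemma~\ref{lem:xor} to $\mathcal{L} = \langle c, G\rangle$ and $\langle 1, [\,]\rangle$ then gives $\mathcal{L} \xor \{1\} = \langle c \xor 1, [\,G, [\,]\,]\rangle = \langle c \xor 1, G\rangle$, which is the claimed expression. Because Lemma~\ref{lem:xor} computes the Minkowski XOR exactly, the resulting identity is exact as well.

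Since every step is an equality, there is essentially no obstacle beyond recognizing the complement-as-XOR identity and justifying the zero-generator singleton representation; the bulk of the work has already been done in Lemma~\ref{lem:xor}.
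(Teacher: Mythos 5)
Your proposal is correct and follows essentially the same route as the paper: the paper's proof also rests on the identity $\neg \mathcal{L} = \mathcal{L} \xor 1 = \{z \xor 1 \mid z \in \mathcal{L}\}$ together with the exactness of the Minkowski XOR from Lemma~\ref{lem:xor}. Your version merely makes explicit the step of realizing the singleton $\{1\}$ as the zero-generator zonotope $\langle 1, [\,]\rangle$, which the paper leaves implicit.
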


\begin{proof}
The proof follows directly from truth table of XOR gate and $\neg \mathcal{L}=\mathcal{L} \xor 1= \{z \xor 1 | z\in \mathcal{L} \} $ which results in inverting each binary vector in $\mathcal{L}$.
\end{proof}

Similarly, we can perform the Minkowski XNOR exactly as follows.
\begin{align}
\mathcal{L}_1 \xnor  \mathcal{L}_2 	& =  \neg( \mathcal{L}_1  \xor \mathcal{L}_2 ). 
\end{align}

\subsubsection{Minkowski AND} \,
Next, we over-approximate the Minkowski AND between two logical zonotopes as follows.
\begin{lemma}
\label{lem:and}
Given two logical zonotopes $\mathcal{L}_1=\langle c_{1},G_{1} \rangle$ and $\mathcal{L}_2 = \langle c_{2},G_{2} \rangle$, the Minkowski AND can be over-approximated by $\mathcal{L}_{\land}=\langle c_{\land},G_{\land} \rangle$: 
\begin{align}
      \mathcal{L}_1  \mathcal{L}_2  &\subseteq \mathcal{L}_{\land}\,. 
     \label{eq:andmink}
\end{align}
where
$c_{\land} {=} c_{1}  c_{2}$ and 
\begin{align}
G_{\land} {=}\big[ &c_{1} g_{2,1},\dots, c_{1} g_{2,\gamma_2}, c_{2} g_{1,1},\dots, c_{2} g_{1,\gamma_1},\nonumber\\ 
& g_{1,1}  g_{2,1}, g_{1,1}  g_{2,2}, \dots, g_{1,\gamma_1}  g_{2,\gamma_2}\big] \, .    
\end{align}
\end{lemma}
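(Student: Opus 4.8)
The plan is to prove the set inclusion $\mathcal{L}_1 \mathcal{L}_2 \subseteq \mathcal{L}_\land$ by taking an arbitrary element of the left-hand side and exhibiting an appropriate choice of binary coefficients that places it in the right-hand side. Concretely, I would start by picking any $z_1 \in \mathcal{L}_1$ and $z_2 \in \mathcal{L}_2$, so that there exist coefficient vectors $\hat\beta_1$ and $\hat\beta_2$ with $z_1 = c_1 \xorsum{i=1}{\gamma_1} g_{1,i}\hat\beta_{1,i}$ and $z_2 = c_2 \xorsum{j=1}{\gamma_2} g_{2,j}\hat\beta_{2,j}$, exactly mirroring the opening of the Minkowski XOR proof in Lemma~\ref{lem:xor}. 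The goal is then to compute the product $z_1 z_2$ (the componentwise AND) and rewrite it in the form $c_\land \xorsum{} G_\land \beta_\land$ for some binary $\beta_\land$.

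The key algebraic step is to expand the AND of the two XOR-sums. First I would establish the distributivity of AND over XOR, i.e. $a(b \xor c) = (ab)\xor(ac)$, which holds bitwise for Booleans and lets me treat the expression like a polynomial over $\mathbb{F}_2$. Applying this, $z_1 z_2 = \big(c_1 \xorsum{i} g_{1,i}\hat\beta_{1,i}\big)\big(c_2 \xorsum{j} g_{2,j}\hat\beta_{2,j}\big)$ expands into the XOR of all pairwise products: the constant term $c_1 c_2$, the cross terms $c_1 g_{2,j}\hat\beta_{2,j}$ and $c_2 g_{1,i}\hat\beta_{1,i}$, and the generator-pair terms $g_{1,i} g_{2,j}\hat\beta_{1,i}\hat\beta_{2,j}$. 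Matching this against the stated $c_\land = c_1 c_2$ and the columns of $G_\land$, I would read off the required coefficients: the generators $c_1 g_{2,j}$ get coefficient $\hat\beta_{2,j}$, the generators $c_2 g_{1,i}$ get coefficient $\hat\beta_{1,i}$, and each generator $g_{1,i} g_{2,j}$ gets coefficient $\hat\beta_{1,i}\hat\beta_{2,j}$. Since each such product of coefficients is again in $\{0,1\}$, this is a valid assignment of $\beta_\land$, so $z_1 z_2 \in \mathcal{L}_\land$, proving the inclusion.

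I would emphasize why the result is only an over-approximation rather than an equality, since this is the conceptual heart of the lemma. The reason is that $\mathcal{L}_\land$ treats the three families of coefficients as \emph{independent} free binary variables, whereas the exact product set constrains the generator-pair coefficients to be the products $\hat\beta_{1,i}\hat\beta_{2,j}$ of the cross-term coefficients. Because $\mathcal{L}_\land$ admits assignments that do not arise from any factorization into a consistent $(\hat\beta_1,\hat\beta_2)$ pair, it can contain spurious points, and hence we only get $\subseteq$ and not equality. I expect the main obstacle to be purely bookkeeping: carefully justifying the bitwise distributive expansion and confirming that the full list of $\gamma_1 + \gamma_2 + \gamma_1\gamma_2$ generators in $G_\land$ exactly accounts for every term in the expansion, with no term left unmatched and no extra column unused.

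Unlike the XOR case, I would not attempt a reverse inclusion, since none is claimed; the proof terminates once the forward inclusion is established for arbitrary $z_1, z_2$, which suffices to conclude $\mathcal{L}_1 \mathcal{L}_2 \subseteq \mathcal{L}_\land$.
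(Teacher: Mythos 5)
Your proposal is correct and follows essentially the same route as the paper's proof: expand the AND of the two XOR-sums via distributivity, match the resulting terms to $c_{\land}$ and the columns of $G_{\land}$, and collect the coefficients (including the products $\hat\beta_{1,i}\hat\beta_{2,j}$) into a valid $\beta_{\land}$ to conclude the forward inclusion. Your added remark on why the coefficients' loss of dependence makes this only an over-approximation is a useful clarification the paper leaves implicit, but it does not change the argument.
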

\begin{proof}
Choose $z_1 \in \mathcal{L}_1$ and $z_2 \in \mathcal{L}_2$. Then, we have
\begin{align}
\exists \hat{\beta}_{1} &: z_1 = c_{1} \xorsum{i=1}{\gamma_{1}}  g_{1,i} \hat{\beta}_{1,i}\, , \label{eq:zZ1} \\
\exists \hat{\beta}_{2} &: z_2 = c_{2} \xorsum{i=1}{\gamma_{2}}  g_{2,i} \hat{\beta}_{2,i}\, .\label{eq:zZ2} 
 \end{align}
AND-ing \eqref{eq:zZ1} and \eqref{eq:zZ2} results in
\begin{align}
z_1z_2=  & c_{1}  c_{2}  \xorsum{i=1}{\gamma_{2}} c_{1}  g_{2,i} \hat{\beta}_{2,i} \xorsum{i=1}{\gamma_{1}} c_{2} g_{1,i}  \hat{\beta}_{1,i} \nonumber \\ &\xorsum{i=1,j=1}{\gamma_{1}, \gamma_{2}}  g_{1,i}  g_{2,j} \hat{\beta}_{1,i} \hat{\beta}_{2,j}\, .
 \end{align}
 Combining the factors in 
 $$\hat{\beta}_{\land} {=} \begin{bmatrix} \hat{\beta}_{1,1:\gamma_{1}} ,\,  \hat{\beta}_{2,1:\gamma_{2}} ,\, \hat{\beta}_{1,1} \hat{\beta}_{2,1},\dots,\, \hat{\beta}_{1,\gamma_{1}} \hat{\beta}_{2,\gamma_{2}} \end{bmatrix}$$ 
 results in having $z_1z_2 \in \mathcal{L}_{\land}$ and thus $\mathcal{L}_1  \mathcal{L}_2 \subseteq \mathcal{L}_{\land}$.
\end{proof}

\begin{remark}
The term over-approximation in binary vectors with $\mathcal{L}_1 \subseteq \mathcal{L}_2$ means that $\mathcal{L}_2$ contains at least all the binary vectors contained in $\mathcal{L}_1$.
\end{remark}
\subsubsection{Minkowski NAND ($\nand$)} \,
Given that we are able to do Minkowski AND and NOT operations, we will be able to do the Minkowski NAND as follows.

\begin{corollary}
\label{col:nand}
Given two logical zonotopes $\mathcal{L}_1{=}\langle c_{1},G_{1} \rangle$ and $\mathcal{L}_2 {=} \langle c_{2},G_{2} \rangle$, the Minkowski NAND can be over-approximated by: 
\begin{align}
    \mathcal{L}_1 \nand \mathcal{L}_2 &= \neg (\mathcal{L}_1 \mathcal{L}_2).
     \label{eq:nandmink}
\end{align}
\end{corollary}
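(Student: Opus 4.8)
The plan is to reduce the statement entirely to the two facts already established: the Minkowski AND over-approximation (Lemma~\ref{lem:and}) and the exact Minkowski NOT (Corollary~\ref{col:not}). The starting observation is purely definitional: for any pair of binary vectors $z_1, z_2 \in \mathbb{B}^n$, the gate identity $z_1 \nand z_2 = \neg(z_1 z_2)$ holds entrywise. First I would unfold the Minkowski NAND through its set definition and push this pointwise identity inside the set-builder, so that
$$\mathcal{L}_1 \nand \mathcal{L}_2 = \{\neg(z_1 z_2) \mid z_1 \in \mathcal{L}_1, z_2 \in \mathcal{L}_2\} = \neg\{z_1 z_2 \mid z_1 \in \mathcal{L}_1, z_2 \in \mathcal{L}_2\},$$
where the last equality is just the definition of Minkowski NOT applied to the exact Minkowski AND set. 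Thus the exact NAND equals $\neg$ of the exact AND, with no loss incurred so far.

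Next I would introduce the over-approximation. By Lemma~\ref{lem:and}, the exact Minkowski AND set is contained in the logical zonotope $\mathcal{L}_{\land} = \langle c_{\land}, G_{\land} \rangle$. The key structural property to invoke is that $\neg$, being applied pointwise (and indeed being an involution on $\mathbb{B}^n$), is monotone with respect to set inclusion, so applying it to both sides of $\{z_1 z_2 \mid z_1 \in \mathcal{L}_1, z_2 \in \mathcal{L}_2\} \subseteq \mathcal{L}_{\land}$ preserves the containment. Combining this with the displayed equality yields $\mathcal{L}_1 \nand \mathcal{L}_2 \subseteq \neg \mathcal{L}_{\land}$. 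Finally, Corollary~\ref{col:not} gives $\neg \mathcal{L}_{\land} = \langle c_{\land} \xor 1, G_{\land} \rangle$ exactly, so the right-hand side of~\eqref{eq:nandmink} is identified explicitly as a logical zonotope with flipped center and the same AND-generators.

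I do not expect a genuine obstacle here, since the result is a composition of two operations that were each handled earlier. The only point requiring care is bookkeeping about where exactness is lost: both the pointwise NAND identity and the NOT step are exact, so the single source of over-approximation is the AND step inherited from Lemma~\ref{lem:and}. I would state this explicitly to justify why the conclusion is written as an over-approximation rather than an equality, and to confirm that the monotonicity of $\neg$ transports the inclusion through without reversing or enlarging it.
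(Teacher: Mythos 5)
Your proposal is correct and follows essentially the same route as the paper, whose proof is the one-line observation that the result follows from the truth table of NAND (your pointwise identity $z_1 \nand z_2 = \neg(z_1 z_2)$) together with Lemma~\ref{lem:and}. Your version simply makes explicit the steps the paper leaves implicit --- the monotonicity of the pointwise $\neg$ under set inclusion and the appeal to Corollary~\ref{col:not} for the exact form of $\neg\mathcal{L}_{\land}$ --- and correctly isolates the AND step as the sole source of over-approximation.
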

\begin{proof}
The proof follows directly from truth table of NAND function and Lemma  \ref{lem:and}.
\end{proof}

\subsubsection{Minkowski OR ($\vee$), and NOR ($\nor$)} \,
Given that we are able to NAND two sets which is a universal gate operation, we will be able to over-approximate the following logical Minkowski operations as shown next:
\begin{align}
\mathcal{L}_1 \lor  \mathcal{L}_2 	&= ( \neg \mathcal{L}_1 ) \nand ( \neg \mathcal{L}_2), \\
\mathcal{L}_1 \nor  \mathcal{L}_2 	&= \neg (\mathcal{L}_1 \lor \mathcal{L}_2).
\end{align}

\subsubsection{Computational Complexity}
For analyzing the computational complexity of the Minkowski logical operations, we have two logical zonotopes $\mathcal{L}_1=\langle c_{1},G_{1} \rangle$ and $\mathcal{L}_2=\langle c_{2},G_{2} \rangle$, where $c_1, c_2 \in \mathbb B^n$, $G_1 \in \mathbb B^{n\times\gamma_1}$ and $G_2 \in \mathbb B^{n\times\gamma_2}$. In Lemma~\ref{lem:xor}, we see that the Minkowski XOR only consists of $n$ binary operations for XORing the centers $c_1$ and $c_2$, resulting in a computational complexity of $\mathcal O(n)$. In other words, the complexity of Minkowski XOR scales linearly with the dimension of the binary vector space. Similarly, we see in Corollary~\ref{col:not} that applying the Minkowski NOT to a logical zonotope also has a computational complexity of $\mathcal O(n)$. By construction, XNOR also has complexity of $\mathcal O(n)$. In Lemma~\ref{lem:and}, we see that the Minkowski AND operation consists of ANDing the centers and generators with each other. Since each AND operation involves $n$ binary operations, the resulting computational complexity is $\mathcal O(n\gamma_1\gamma_2)$. Since the complexity of the Minkowski AND operation dominates the Minkowski NAND, OR, and NOR operations, they also have complexities of $\mathcal O(n\gamma_1\gamma_2)$. We list the operations and their corresponding complexities in Table~\ref{tab:complexity}.

\begin{table}[tbp]
\caption{Minkowski Logical Operation Complexity}
\label{tab:complexity}
\vspace{-2mm}
\centering
\normalsize
\resizebox{\columnwidth}{!}{
\begin{tabular}{l c c}
\toprule
 Operation & Complexity& Type \\
\midrule
XOR, NOT, XNOR & $\mathcal{O}(n)$ & Exact\\
AND, NAND, OR, NOR & $\mathcal{O}(n\gamma_1\gamma_2)$ & Over-approximation\\
\bottomrule
\end{tabular}
}
\vspace{-4mm}
\end{table}

\subsection{Minkowski Semi-Tensor Product}
Semi-tensor product has many application in different fields and is often useful in the analysis of logical systems \cite{conf:tensorproductsurvey}. In order to apply the Minkowski semi-tensor product to logical zonotopes, we first need to generalize logical zonotopes to logical matrix zonotopes. 

\begin{definition}(\textbf{Logical Matrix Zonotope}) \label{df:matzono}
Given a matrix $C  \in \mathbb{B}^{m\times n }$ and $\gamma {\in} \mathbb{N}$ generator matrices $G_i \in \mathbb{B}^{m\times n }$ in a generator list $\bar{G}{=} \big\{ G_1, \dots$  $,G_{\gamma} \big\}$, 
a logical matrix zonotope is defined as
\begin{align*}
\mathcal{L} = \Big\{ X \in \mathbb{B}^{m\times n } \; \Big| \; X = C \xorsum{i=1}{\gamma}  G_i \beta_i, \, \beta_i \in \{0,1\} \Big\} \, .
\end{align*}
We again use the shorthand notation $\mathcal{L} = \zono{C,G}$ for a logical matrix zonotope. 
\end{definition}

We define the  Minkowski semi-tensor product with a slight abuse of the notation. 
\begin{align}
    \mathcal{L}_1 \ltimes  \mathcal{L}_2 	&= \{z_1 \ltimes z_2| z_1\in \mathcal{L}_1, z_2 \in \mathcal{L}_2 \}\, . 
\end{align}

We compute the Minkowski semi-tensor product between two logical matrix zonotopes as follows.

\begin{lemma}
\label{lem:semitensorproductzonotopes}
Given two logical matrix zonotopes $\mathcal{L}_1=\langle C_{1},\bar{G}_{1} \rangle \in \mathbb{B}^{m \times n}$ and $\mathcal{L}_2 = \langle C_{2},\bar{G}_{2} \rangle \in \mathbb{B}^{p \times q}$, the Minkowski semi-tensor
product can be over-approximated by $\mathcal{L}_{\ltimes}=\big\langle C_{\ltimes} ,\bar{G}_{\ltimes} \big\rangle$: 
\begin{align}
    \mathcal{L}_1 \ltimes \mathcal{L}_2 &\subseteq \mathcal{L}_{\ltimes}, 
     \label{eq:nandmink}
\end{align}
where 
\begin{align}
    C_{\ltimes} {=}& C_{1} \ltimes C_{2},  \\
    \bar{G}_{\ltimes} {=}& \big\{ C_{1} {\ltimes} G_{2,1},\dots,C_{1} {\ltimes} G_{2,\gamma_1}, G_{1,1}  {\ltimes} C_{2} ,\dots,G_{1,\gamma_1}  {\ltimes} C_{2} \nonumber \\
    & \, \, G_{1,1}  {\ltimes} G_{2,1}  ,\dots,G_{1,\gamma_1}  {\ltimes} G_{2,\gamma_2}  \big\}.
\end{align}

\end{lemma}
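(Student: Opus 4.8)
The plan is to mirror the proof of Lemma~\ref{lem:and}, replacing the Minkowski AND by the semi-tensor product and exploiting the fact that $\ltimes$ distributes over $\xor$ in the same bilinear fashion that AND does. First I would fix arbitrary members $z_1 \in \mathcal{L}_1$ and $z_2 \in \mathcal{L}_2$ and write their generator expansions $z_1 = C_{1} \xorsum{i=1}{\gamma_{1}} G_{1,i}\hat{\beta}_{1,i}$ and $z_2 = C_{2} \xorsum{i=1}{\gamma_{2}} G_{2,i}\hat{\beta}_{2,i}$, with every $\hat{\beta} \in \{0,1\}$. The goal is to show $z_1 \ltimes z_2 \in \mathcal{L}_{\ltimes}$, which yields $\mathcal{L}_1 \ltimes \mathcal{L}_2 \subseteq \mathcal{L}_{\ltimes}$; since only an over-approximation is claimed, this single inclusion is all that is needed (unlike the exact XOR case, the reverse inclusion is not asserted).

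The crux, which I would establish first, is that the semi-tensor product is bilinear with respect to $\xor$ over the binary field: for conformable arguments, $(A \xor B) \ltimes D = (A \ltimes D) \xor (B \ltimes D)$, $D \ltimes (A \xor B) = (D \ltimes A) \xor (D \ltimes B)$, and $(\beta A)\ltimes D = \beta (A \ltimes D)$ for any scalar $\beta \in \{0,1\}$. These follow by unfolding $M \ltimes N = (M \otimes I_{s_1})(N \otimes I_{s_2})$: the Kronecker product $\cdot \otimes I$ distributes over $\xor$ entrywise because AND distributes over XOR on $\mathbb{B}$, and the subsequent matrix product (taken with XOR as addition and AND as multiplication) distributes over $\xor$ for the same reason, each entry being an XOR-sum of AND-products. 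I would also record the conformability point that, because $C_{1}$ and all $G_{1,i}$ share the shape $m\times n$ while $C_{2}$ and all $G_{2,i}$ share $p\times q$, the values $s_1 = s/n$ and $s_2 = s/p$ are identical for every term, so all the semi-tensor products appearing below use the same padding and land in a common output space, making their XOR-sum well defined.

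With bilinearity in hand, I would expand $z_1 \ltimes z_2$ term by term. Distributing over the two XOR-sums produces the constant term $C_{1} \ltimes C_{2}$, the ``linear'' terms $(C_{1} \ltimes G_{2,j})\hat{\beta}_{2,j}$ and $(G_{1,i} \ltimes C_{2})\hat{\beta}_{1,i}$, and the ``bilinear'' terms $(G_{1,i} \ltimes G_{2,j})\hat{\beta}_{1,i}\hat{\beta}_{2,j}$, where scalar homogeneity has pulled the coefficients out of each product. Matching these against $C_{\ltimes} = C_{1}\ltimes C_{2}$ and the generator list $\bar{G}_{\ltimes}$ in the statement, I would collect the coefficients into a single factor vector $\hat{\beta}_{\ltimes} = \big[\hat{\beta}_{1,1:\gamma_{1}},\, \hat{\beta}_{2,1:\gamma_{2}},\, \hat{\beta}_{1,1}\hat{\beta}_{2,1}, \dots, \hat{\beta}_{1,\gamma_{1}}\hat{\beta}_{2,\gamma_{2}}\big]$, exactly as in Lemma~\ref{lem:and}. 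Each entry of $\hat{\beta}_{\ltimes}$ lies in $\{0,1\}$, which exhibits $z_1 \ltimes z_2$ as a valid member of $\mathcal{L}_{\ltimes}$ and closes the inclusion.

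The main obstacle I anticipate is not the bookkeeping but justifying the bilinearity step cleanly: one must be explicit that the arithmetic underlying $\ltimes$ is the binary field arithmetic (XOR as addition, AND as multiplication), so that $\ltimes$ is genuinely bilinear rather than operating over the OR--AND semiring, where distributivity over XOR would fail. As in Lemma~\ref{lem:and}, the inclusion is strict in general precisely because the relaxed factor vector $\hat{\beta}_{\ltimes}$ treats each product $\hat{\beta}_{1,i}\hat{\beta}_{2,j}$ as an independent free binary variable, and this relaxation is the sole source of over-approximation.
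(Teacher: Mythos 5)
Your proposal is correct and follows essentially the same route as the paper's proof: unfold $\ltimes$ via the Kronecker products with $I_{s_1}$ and $I_{s_2}$, distribute over the XOR-sums, and collect the constant, linear, and bilinear terms into the factor vector $\hat{\beta}_{\ltimes}$ exactly as in Lemma~\ref{lem:and}. Your explicit justification of the bilinearity of $\ltimes$ over $\xor$ and of the conformability of all terms is a useful elaboration of steps the paper leaves implicit, but it is not a different argument.
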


\begin{proof}
 With $s$ as the least common multiple of $n$ and $p$, $s_1 =s/n$, and $s_2 =s/p$, choose $z \in \mathcal{L}_1 \ltimes \mathcal{L}_2$. Then, $\exists \hat{\beta}_{1},\hat{\beta}_{2}$ such that
\begin{align*}
z {=}& \Big(C_{1} \xorsum{i=1}{\gamma_{1}}  G_{1,i} \hat{\beta}_{1,i} \Big) \ltimes \Big(C_{2} \xorsum{i=1}{\gamma_{2}}  G_{2,i} \hat{\beta}_{2,i}\Big) \nonumber \\
{=}& \bigg( \Big(C_{1} \xorsum{i=1}{\gamma_{1}}  G_{1,i} \hat{\beta}_{1,i} \Big) \otimes I_{s_1} \bigg)  \bigg( \Big(C_{2} \xorsum{i=1}{\gamma_{2}}  G_{2,i} \hat{\beta}_{2,i} \Big) \otimes I_{s_2} \bigg) \nonumber \\
{=}& \Big(C_{1} {\otimes} I_{s_1} \xorsum{i=1}{\gamma_{1}}  G_{1,i} \hat{\beta}_{1,i} {\otimes} I_{s_1} \Big) \Big(C_{2} {\otimes} I_{s_2} \xorsum{i=1}{\gamma_{2}}  G_{2,i} \hat{\beta}_{2,i} {\otimes} I_{s_2} \Big) \nonumber \\
{=}& \Big(C_{1} {\otimes} I_{s_1}\Big)\Big(C_{2} {\otimes} I_{s_2\!}\Big) \xorsum{i=1}{\gamma_{2}} \Big(C_{1} {\otimes} I_{s_1}\!\Big) \Big(G_{2,i} \hat{\beta}_{2,i} {\otimes} I_{s_2}\! \Big) \nonumber \\
&\!\!{\xorsum{i=1}{\gamma_{1}}}  \Big(G_{1,i} \hat{\beta}_{1,i} {\otimes} I_{s_1} \Big) \Big(C_{2} {\otimes} I_{s_2}\!\Big)\nonumber \\
&\!\!\xorsum{i=1,j=1}{\gamma_{1},\gamma_{2}} \Big(G_{1,i} \hat{\beta}_{1,i} {\otimes} I_{s_1} \Big) \Big(G_{2,j} \hat{\beta}_{2,j} {\otimes} I_{s_2} \Big).
\label{eq:zZ1timeZ2} 
 \end{align*}
 Combining the factors in 
  $$\hat{\beta}_{\ltimes} {=} \begin{bmatrix} \hat{\beta}_{1,1:\gamma_{1}} ,\,  \hat{\beta}_{2,1:\gamma_{2}} ,\, \hat{\beta}_{1,1} \hat{\beta}_{2,1},\dots,\, \hat{\beta}_{1,\gamma_{1}} \hat{\beta}_{2,\gamma_{2}} \end{bmatrix}$$
 results in having $z \in \mathcal{L}_{\ltimes}$ and thus $\mathcal{L}_1 \ltimes \mathcal{L}_2 \subseteq \mathcal{L}_{\ltimes}$.
\end{proof}


\subsection{Logical Zonotope Containment and Generators Reduction}

In certain scenarios, we might need to find a logical zonotope that contains at least the given binary vectors. One way to do that is as follows.
\begin{lemma}
\label{lm:enclosepoints}
Given a list $\mathcal{S}=\{s_1,\dots,s_p \}$ of $p$ binary vectors in $\mathbb{B}^{n}$, the logical zonotope $\mathcal{L}=\zono{c,G}$ with $s_{i} \in \mathcal{L}, \forall i = \{1,\dots,p\}$, is given by
\begin{align}
    c &= s_{1}, \\
    g_{i-1} &= s_{i} \xor c,\,\, \forall i = \{2,\dots,p\}.
\end{align}
\end{lemma}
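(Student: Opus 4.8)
The plan is to prove the claim by direct verification of set membership: for each listed vector $s_i$, I would exhibit an explicit choice of the binary coefficients $\beta_{1:\gamma}$ (with $\gamma = p-1$) that reproduces $s_i$ from the expansion in Definition~\ref{df:zono}. Since the lemma only asserts $s_i \in \mathcal{L}$ for every $i$, and not equality of $\mathcal{L}$ with $\mathcal{S}$, it suffices to produce one witness $\beta$ per point.

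First I would dispatch the center. Choosing $\beta_1 = \dots = \beta_\gamma = 0$ makes the XOR-sum $\xorsum{i=1}{\gamma} g_i \beta_i$ collapse to the identity element, so that $x = c$. By the construction $c = s_1$, this immediately gives $s_1 \in \mathcal{L}$.

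Next, for each $i \in \{2,\dots,p\}$, I would activate exactly one generator: set $\beta_{i-1} = 1$ and $\beta_j = 0$ for all $j \neq i-1$. Substituting into the zonotope expansion yields $x = c \xor g_{i-1}$, and plugging in the definition $g_{i-1} = s_i \xor c$ gives $x = c \xor (s_i \xor c)$. The one substantive step is then the algebra of XOR: using commutativity, associativity, and the self-inverse property $c \xor c = 0$ (with $0$ the identity for $\xor$), this simplifies to $x = s_i$. Hence $s_i \in \mathcal{L}$, completing the argument over all $i$.

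There is essentially no hard obstacle here; the result is a direct construction whose only content is the involutivity of XOR. The two points I would be careful about are bookkeeping rather than mathematical: confirming the index shift is consistent (the generator $g_{i-1}$ is the witness for the point $s_i$, matching the lemma's indexing $g_{i-1} = s_i \xor c$), and invoking the empty-sum convention so that the all-zero coefficient vector recovers the center. I would also note in passing that $\mathcal{L}$ typically strictly contains $\mathcal{S}$ (it encloses up to $2^{p-1}$ vectors), so the statement is a containment guarantee, not a tight representation.
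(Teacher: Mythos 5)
Your proposal is correct and matches the paper's own proof: both arguments exhibit the all-zero coefficient vector as the witness for $s_1 = c$ and the single activated coefficient $\beta_{i-1}=1$ as the witness for $s_i$, reducing to $c \xor (s_i \xor c) = s_i$ via the self-inverse property of XOR. Your added remarks on index bookkeeping and on $\mathcal{L}$ generally being a strict superset of $\mathcal{S}$ are accurate but not needed beyond what the paper states.
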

\begin{proof}
By considering the truth table of all values of $\beta$, we can find that the evaluation of $\mathcal{L}$ results in  $c = s_{1}$ at one point and $g_{i-1}\xor c=s_{i} \xor c \xor c=s_{i} $, $\forall i = \{2,\dots,p\}$, at other points. 
\end{proof}

We propose Algorithm~\ref{alg:reduce} for reducing the number of generators while maintaining the same contained individual vectors. We first compute all the different binary vectors contained in the input logical zonotope $\mathcal{L}$ in Line~\ref{ln:evalfirst}. Then, the algorithm checks the effect of removing each generator by computing the binary vectors contained in the logical zonotope without the removed generator in Line~\ref{ln:evalsecond}. The chosen generator is deleted if its removal does not remove any binary vector in Lines \ref{ln:isequalcont} and \ref{ln:removegen}.



\begin{algorithm}[t]
\caption{Function \texttt{reduce} to reduce the number of generators of a logical zonotope.}
\label{alg:reduce}
\KwInput{A logical zonotope $\mathcal{L}=\zono{c,G }$ with large number $\gamma$ of generators } 
\KwOutput{A logical zonotope $\mathcal{L}_r=\zono{c_r,G_r }$ with $\gamma_r \le \gamma$ generators }
$c_r=c$ // Function \texttt{reduce} does not change the center \\
  $\mathcal{S}{=}$  \texttt{evaluate}($\mathcal{L}$) // Compute a list $\mathcal{S}$ of all binary vectors contained in $\mathcal{L}$ \label{ln:evalfirst}\\
  $G_r = G$ // Start with the same number of generators \\
  \For{$i = 1:\gamma$}{  
  $\mathcal{S}_r{=}$  \texttt{evaluate}($\mathcal{L}_r\, \backslash \, g_i$) // Compute a list $\mathcal{S}_r$ of all binary vectors contained in $\mathcal{L}_r$ without the generator $g_i$ \label{ln:evalsecond} \\ 
  \If{\texttt{isequal}($\mathcal{S},\mathcal{S}_r$)} 
  {\label{ln:isequalcont}
  $G_r=$ \texttt{removeGenerator}($G_r$,$g_i$) // Remove $g_i$ from $G_r$ \label{ln:removegen}
  }
  }
  $\mathcal{L}_r=\zono{c_r,G_r }$   
\end{algorithm}

\subsection{Reachability Analysis}
We aim to over-approximate the exact reachable region of \eqref{eq:feq} which is defined in Definition \ref{def:exactreachF} as follows.

\begin{theorem}
\label{thm:reach}
Given a logical function $f: \mathbb{B}^{n_x} \times \mathbb{B}^{n_u} \rightarrow \mathbb{B}^{n_x}$ in \eqref{eq:feq} and a set of possible inputs $\,\mathcal{U}_k \subset \mathbb{B}^{n_u}$ and starting from initial set $\hat{\mathcal{R}}_0 \subset 
\mathbb B^{n_x}$ where $x(0) \in \hat{\mathcal{R}}_0$, then the reachable region computed as
\begin{align}
    \hat{\mathcal{R}}_{k+1} =  f \big(\hat{\mathcal{R}}_{k},\mathcal{U}_k\big)
\end{align}
using logical zonotopes operations over-approximates the exact reachable set, i.e., $\hat{\mathcal{R}}_{k+1} {\supseteq} \mathcal{R}_{k+1}$.
\end{theorem}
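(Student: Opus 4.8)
The plan is to prove the statement by induction on the reachability step $k$, reducing the claim to two ingredients: a \emph{single-step} over-approximation property of the logical-zonotope evaluation of $f$, and monotonicity of the exact (pointwise) Minkowski operations with respect to set inclusion. Write $f_{\mathbb{M}}$ for the evaluation of $f$ via the pointwise Minkowski operations \eqref{eq:xor}--\eqref{eq:and}, and $f_{\mathcal{L}}$ for its evaluation via the logical-zonotope operations of Lemmas~\ref{lem:xor} and~\ref{lem:and} and Corollaries~\ref{col:not} and~\ref{col:nand}. Note first that $\mathcal{R}_{k+1}$ is exactly the pointwise image $f_{\mathbb{M}}(\mathcal{R}_k,\mathcal{U}_k)$ (this follows from Definition~\ref{def:exactreachF} by peeling off one step). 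For the base case, $\hat{\mathcal{R}}_0 \supseteq \mathcal{X}_0 = \mathcal{R}_0$ holds because the initial zonotope is chosen to enclose the set of initial states (cf.\ Lemma~\ref{lm:enclosepoints}). For the inductive step, assuming $\hat{\mathcal{R}}_k \supseteq \mathcal{R}_k$, I would chain
\begin{align*}
\mathcal{R}_{k+1} = f_{\mathbb{M}}(\mathcal{R}_k,\mathcal{U}_k) \subseteq f_{\mathbb{M}}(\hat{\mathcal{R}}_k,\mathcal{U}_k) \subseteq f_{\mathcal{L}}(\hat{\mathcal{R}}_k,\mathcal{U}_k) = \hat{\mathcal{R}}_{k+1},
\end{align*}
where the first inclusion is monotonicity of $f_{\mathbb{M}}$ and the second is the single-step over-approximation.

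Second, I would establish the single-step bound $f_{\mathbb{M}}(\hat{\mathcal{R}}_k,\mathcal{U}_k) \subseteq f_{\mathcal{L}}(\hat{\mathcal{R}}_k,\mathcal{U}_k)$ by structural induction on the gate decomposition of $f$. The atomic case is a projection onto one of the input sets, where the two evaluations coincide. For a composite gate $f = \Phi(f_1,f_2)$ with $\Phi$ one of $\xor,\neg,\lor,\nand,\nor,\xnor,\land$, I would use the inductive hypotheses $f_{j,\mathbb{M}} \subseteq f_{j,\mathcal{L}}$ together with the two-step argument
\begin{align*}
\Phi_{\mathbb{M}}(f_{1,\mathbb{M}},f_{2,\mathbb{M}}) \subseteq \Phi_{\mathbb{M}}(f_{1,\mathcal{L}},f_{2,\mathcal{L}}) \subseteq \Phi_{\mathcal{L}}(f_{1,\mathcal{L}},f_{2,\mathcal{L}}),
\end{align*}
where the first inclusion is monotonicity of the exact Minkowski gate $\Phi_{\mathbb{M}}$ and the second is the per-gate over-approximation. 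For $\xor$, $\neg$, and $\xnor$ the second inclusion is in fact an equality (Lemma~\ref{lem:xor}, Corollary~\ref{col:not}), while for $\land$, $\nand$, $\lor$, and $\nor$ it is a genuine superset (Lemma~\ref{lem:and}, Corollary~\ref{col:nand}); either way the inclusion holds, which is all that is needed. Monotonicity of each $\Phi_{\mathbb{M}}$ is immediate from its pointwise definition, since enlarging either argument can only enlarge the set of pairwise results.

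The main obstacle is the composition step, and the key realization is that over-approximation does \emph{not} compose on its own: it is monotonicity of the exact Minkowski operations that propagates inner over-approximations through outer gates. Concretely, once an inner subexpression is replaced by its zonotope superset, one must argue that feeding this larger set into the next gate still yields a superset of the true image, and this is exactly where monotonicity of $\Phi_{\mathbb{M}}$ is applied \emph{before} invoking the per-gate zonotope bound. A secondary point to handle carefully is that the input set $\mathcal{U}_k$ enters both the exact and zonotope evaluations as a fixed, shared argument, so no spurious over-approximation is introduced on the input side; and the two inductions—structural on $f$ within a single step, and temporal on $k$—should be kept clearly separate to avoid circularity. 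With these pieces the displayed chain closes the induction and yields $\hat{\mathcal{R}}_{k+1} \supseteq \mathcal{R}_{k+1}$.
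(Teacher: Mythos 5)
Your proof is correct and follows the same basic strategy as the paper's --- reduce the claim to the per-gate properties of Lemmas~\ref{lem:xor} and \ref{lem:and} and Corollaries~\ref{col:not} and \ref{col:nand} --- but it is substantially more complete. The paper's proof is a two-sentence sketch: it observes that each Minkowski gate is computed either exactly or as an over-approximation and then asserts $\hat{\mathcal{R}}_{k+1}\supseteq\mathcal{R}_{k+1}$, leaving implicit precisely the step you identify as the crux, namely that over-approximation does not compose on its own and must be propagated through outer gates via monotonicity of the exact (pointwise) Minkowski operations. Your structural induction on the gate decomposition, with the two-step chain (monotonicity of $\Phi_{\mathbb{M}}$ first, then the per-gate zonotope bound), supplies exactly this missing argument, and keeping the temporal induction on $k$ separate from the structural induction on $f$ is the right way to organize it. One small imprecision: the equality $\mathcal{R}_{k+1}=f_{\mathbb{M}}(\mathcal{R}_k,\mathcal{U}_k)$ does not hold in general, because evaluating $f$ gate-by-gate with pointwise Minkowski operations discards dependencies between repeated occurrences of the same variable (for $f(x)=x\xor x$ and $\mathcal{X}=\{0,1\}$ the exact image is $\{0\}$ while the Minkowski evaluation yields $\{0,1\}$); what holds is $\mathcal{R}_{k+1}\subseteq f_{\mathbb{M}}(\mathcal{R}_k,\mathcal{U}_k)$, which is all your chain of inclusions actually needs, so the conclusion is unaffected.
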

\begin{proof} The logical function consists in general of XOR and NOT operations and any logical operations constructed from the NAND. $\forall x(k) \in \mathcal{R}_{k}$ and $u(k) \in \mathcal{U}_k$, we are able to compute Minkowski XOR and NOT exactly using Lemma \ref{lem:xor} and Corollary \ref{col:not} and over-approximate Minkowski NAND using Lemma \ref{lem:and} and Corollary \ref{col:nand}. Thus, $\hat{\mathcal{R}}_{k+1} \supseteq \mathcal{R}_{k+1}$.
\end{proof}

In Algorithm~\ref{alg:reach}, we overview an algorithm based on Theorem~\ref{thm:reach} for $N$-step reachability analysis using logical zonotopes. First, in Line~\ref{ln:contain}, we use Lemma~\ref{lm:enclosepoints} to convert the initial set of points ${\mathcal S}_0$ to get an initial logical zonotope $\bar{\mathcal{R}}_{0}$ which is further reduced to $\hat{\mathcal{R}}_{0}$ using Algorithm~\ref{alg:reach}. Then, we iterate $N$ times to find the $N$th-step reachable set as a logical zonotope.

\begin{algorithm}[t]
\caption{Reachability analysis for N-steps}
\label{alg:reach}
\KwInput{A logical function $f$, an initial set of points $\mathcal{S}_0$, a set of control input points $\mathcal{S}_{u,k}$, $\forall k=1,\dots,N$} 
\KwOutput{A reachable logical zonotope $\hat{\mathcal{R}}_{N}$ at the N-th step}
  $\bar{\mathcal{R}}_{0} =$ \texttt{enclosePoints}($\mathcal{S}_0$) // Enclose the set of points with a logical zonotope using Lemma~\ref{lm:enclosepoints}\label{ln:contain} \\
  $\hat{\mathcal{R}}_{0} =$ \texttt{reduce}($\bar{\mathcal{R}}_{0}$) // Reduce the number of generators using Algorithm~\ref{alg:reduce} \\
$\bar{\mathcal{U}}_{k} =$ \texttt{enclosePoints}($\mathcal{S}_{u,k}$), $\forall k=0,\dots,N-1$ \label{ln:containU} \\
${\mathcal{U}}_{k} =$ \texttt{reduce}($\bar{\mathcal{U}}_{k}$),  $\forall k=0,\dots,N-1$ \label{ln:reduceU}\\
  \For{$k = 0:N-1$}{  
    $\hat{\mathcal{R}}_{k+1} = f(\hat{\mathcal{R}}_{k}, \mathcal{U}_k)$ // Apply Minkowski logical operations
  }
\end{algorithm}

\section{Case Studies}\label{sec:eval}


To illustrate the use of operating over the generators in logical zonotopes, we present two different use cases. We first show how logical zonotopes can drastically improve the complexity of exhaustively searching for the key of an LFSR. Then, we formulate an intersection crossing problem, where we compare the computational complexity of BDDs, BCN-based semi-tensor products, and logical zonotopes when verifying the safety of four vehicles' intersection crossing protocol. 
The experiments were done on a processor 11$^{th}$ Generation Intel(R) Core(TM) i7-1185G7 with 16.0 GB RAM. 


\subsection{Exhaustive Search for the Key of an LFSR}
LFSRs are used intensively in many stream ciphers in order generate pseudo random longer keys from the input key. For simplicity we consider 60-bits LFSR $A$ initialized with the input key $K_A$ with length $l_k$. The operations on bit level are shown in Figure~\ref{fig:LSFR} where
\begin{align*}
    A[1] &= A[60] \xor A[59] \xor A[58] \xor A[14], \\
    \text{output} &= A[60] \xor A[59].
\end{align*}
Each bit $i$ of the output of the LFSR is XOR-ed with the message $m_A[i]$ to obtain one bit of the ciphertext $c_A[i]$. 

\begin{figure}
    \centering
    \includegraphics[scale=0.19]{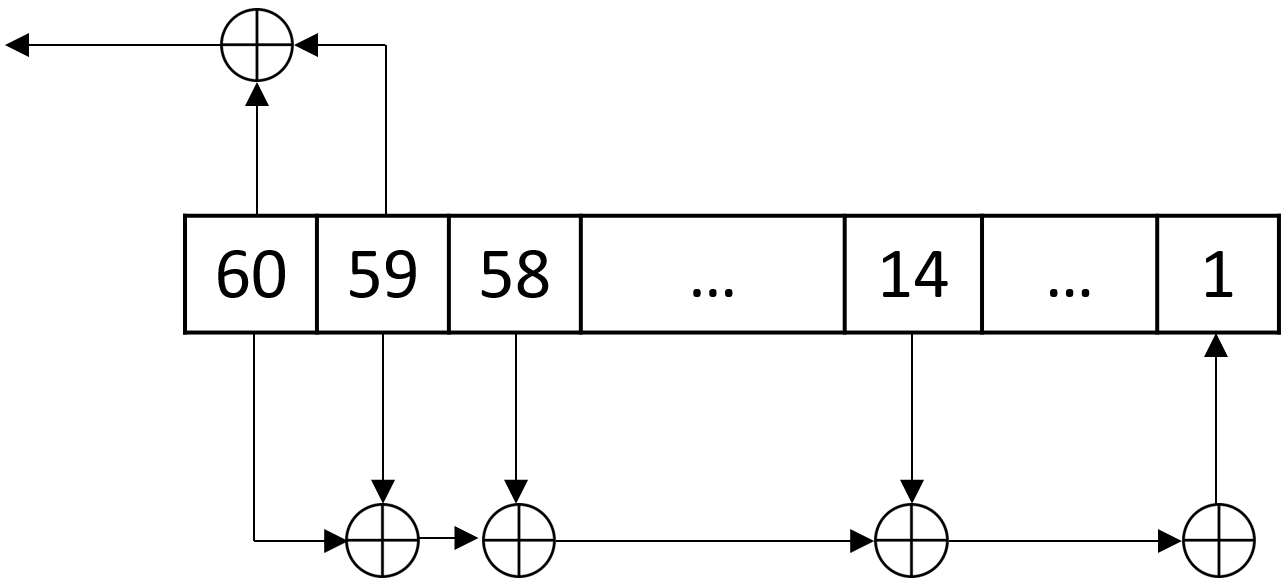}
    \caption{LFSR $A$.}
    \label{fig:LSFR}
    \vspace{-4mm}
\end{figure}

Now consider that we aim to obtain the input key $K_A$ using exhaustive search by trying out $2^{l_k}$ key values that can generate the cipher $c_A$ from $m_A$ with worst-case complexity $\mathcal{O}(2^{l_k})$ where $l_k=60$ is the key length. Instead, we propose to use logical zonotopes in Algorithm \ref{alg:LFSR} to decrease the complexity of the search algorithm. We start by defining a logical zonotope $\mathcal{L}_B$, which contains $0$ and $1$ in line \ref{ln:assignlog01}. Initially, we assign a logical zonotope to each bit of LFSR $A$ in line \ref{ln:bit2log} except the first two bits. Then, we set the first two bits of LFSR $A$ to one of the $2^2$ options of comb list in line \ref{ln:init2bits}. Then, we call the LFSR with the assigned key bits to get a list of logical zonotopes $\mathcal{G}_A$ with misuse of notations. The pseudo-random output of logical zonotopes $\mathcal{G}_A$ is XOR-ed with the message $m_A$ to get a list of ciphertext logical zonotopes $\mathcal{C}_A$. If any cipher of the list $c_A$ is not included in the corresponding logical zonotope $\mathcal{C}_A$, then the assigned two digits in line \ref{ln:init2bits} are wrong, and we do not need to continue finding values for the remaining bits of LFSR $A$. After finding the correct two bits with $c_A \in \mathcal{C}_A$, we continue by assigning a zero to bit by bit in line \ref{ln:keq0}. Then we generate the pseudo-random numbers $\mathcal{G}_A$ and XOR-ed it with the $m_A$ to get the list of cipher logical zonotopes $\mathcal{C}_A$. The cipher logical zonotopes $\mathcal{C}_A$ are checked to contain the list of ciphers $c_A$ and assign $\mathcal{K}_A$ in line \ref{ln:keq1}, accordingly. 
We measured the execution time of Algorithm \ref{alg:LFSR} with different key sizes in comparison to the execution time of traditional search in Table~\ref{tab:exectimekey}. To compute the execution time of the traditional search, we multiply the number of iterations by the average execution time of a single iteration. 


\begin{table}[tbp]
\caption{Execution Time (seconds) of exhaustive key search.}
\label{tab:exectimekey}
\vspace{-2mm}
\centering
\normalsize
\begin{tabular}{c  c c }
\toprule
 Key Size & Algorithm \ref{alg:LFSR} & Traditional Search \\
\midrule
30 & 1.97&  $1.18 \times 10^6$\\
60 &4.76&   $1.26 \times 10^{15}$\\
120 & 7.95& $1.46 \times 10^{33}$  \\
\bottomrule
\end{tabular}
\vspace{-4mm}
\end{table}

\begin{algorithm}[t]
\caption{Exhaustive search for LFSR key using logical zonotopes}
\label{alg:LFSR}
\KwInput{A sequence of messages $m_A$ and its ciphertexts $c_A$ with length $l_m$} 
\KwOutput{The used Key $\mathcal{K}_A$  with length $l_k$ in encrypting $m_A$}
$\mathcal{L}_B=$\texttt{enclosePoints}$([0\,\,1])$\, // enclose the points $0$ and $1$ by a logical zonotope \label{ln:assignlog01}\\
$\text{comb}=\{00,01,10,11\}$ \\
\For{$i = 3:l_k$}{
 $\mathcal{K}_A[i]=\mathcal{L}_B$ \label{ln:keqlogZono1} // assign the logical zonotope $\mathcal{L}_B$ to the key bits \label{ln:bit2log}\\
 }
\For{$i = 1:4$}{
  $\mathcal{K}_A[1:2]=\text{comb}[i]$  \label{ln:init2bits}\\
  $\mathcal{G}_A =$ \texttt{LFSR}($\mathcal{K}_A$) // generate pseudo random numbers from the key $\mathcal{K}_A$  \label{ln:lfsr1} \\
  $\mathcal{C}_A= \mathcal{G}_A \xor m_A$ \label{ln:enclog1} \\ 
  \If{$\neg$\texttt{contains}\,($\mathcal{C}_A$,$c_A$)} 
  {\label{ln:contains}
    \texttt{continue}; // continue if $c_A \notin \mathcal{C}_A$
  }
  \For{$j = 3:l_k$}{  
  $\mathcal{K}_A[j]=0$. \label{ln:keq0} \\ 
  $\mathcal{G}_A =$ \texttt{LFSR}($\mathcal{K}_A$) // generate pseudo random numbers from the key $\mathcal{K}_A$ \\
  $\mathcal{C}_A= \mathcal{G}_A \xor m_A$ \label{ln:enclog} \\ 
  \If{$\neg$\texttt{contains}\,($\mathcal{C}_A$,$c_A$)}
  {\label{ln:contains}
    $\mathcal{K}_A[i]=1$ // assign if $c_A \notin \mathcal{C}_A$ \label{ln:keq1}
  }
  }
  \If{\texttt{isequal}\,($\mathcal{K}_A \xor m_A$,$c_A$)}
  {\label{ln:isequal}
    \texttt{return} $\mathcal{K}_A$ \label{ln:return}
  }
  }
\end{algorithm}

\subsection{Safety Verification of an Intersection Crossing Protocol}

In this example, we consider an intersection where four vehicles need to pass through the intersection, while avoiding collision. For comparison, we encode their respective crossing protocols as logical functions and verify the safety of their protocols through reachability analysis using BDDs, a BCN semi-tensor product-based approach, and logical zonotopes. 
We denote whether vehicle $i$ is passing the intersection or not at time $k$ by $p_i(k)$. Then, we denote whether vehicle $i$ came first or not at time $k$ by $c_i(k)$. We use control inputs $u^p_i(k)$ and $u^c_i(k)$ to denote the decision of vehicle $i$ to pass or to come first at time $k$, respectively. 
For each vehicle $i=1,\dots,4$, the intersection passing protocol is represented by the following:
\begin{align}
    {p}_i(k+1) &= u^p_i(k)  \neg {p}_i(k)  \neg c_{i}(k).
    \label{eq:VehPassSafe} 
\end{align}
Then, the logic behind coming first for each vehicle $i=1,\dots,4$ is written as the following:
\begin{align}
    c_{i}(k+1) &= \neg p_{i}(k+1) ( u^c_{i}(k) \lor ( \neg p_{i}(k) p_{i}(k+1)) ).
    \label{eq:CameFirst}
\end{align}

To perform reachability analysis, we initialize the crossing problem with the following conditions: $p_1(0) = 1, \, p_2(0) \in \{0 , 1\}, \, p_3(0) = 0 , \, p_4(0) \in \{0 , 1\} , \, c_1(0) = 1,\, c_2(0) \in \{0 , 1\}, \, c_3(0) = 0 , \, c_4(0) \in \{0 , 1\}$.
To verify the passing protocol is always safe, under any decision made by each vehicle, we perform reachability analysis under the following uncertain control inputs: $u^p_1(k)\in \{0 , 1\},\,  u^p_2(k) =0,\,u^p_3(k)\in \{0 , 1\},\,  u^p_4(k) =0,\, u^c_1(k)\in \{0 , 1\}, \, u^c_2(k) \in \{0 , 1\},\,u^c_3(k)\in \{0 , 1\}, \, u^c_4(k) \in \{0 , 1\},\, k=0,\dots,N$.

\begin{table*}[tbp]
\caption{Execution Time (seconds) and number of points in each set (size) for verifying an intersection crossing protocol.}
\label{tab:exectimetrafic}
\centering
\normalsize
\begin{tabular}{c c c  c c  c c}
\toprule
 &  \multicolumn{2}{c}{Zonotope}  &\multicolumn{2}{c}{BDD} & \multicolumn{2}{c}{BCN}\\
  \cmidrule(lr){2-3} \cmidrule(lr){4-5} \cmidrule(lr){6-7}
 Steps $N$  & Time & Size &  Time & Size & Time & Size\\
\midrule
10 & 0.06& 16& 3.32& 14 & 7.75&  14\\
50 & 0.15&16 & 19.87 & 14& 48.40&14 \\
100 & 0.26&16 &39.78 & 14&104.91 &14 \\
1000 &1.84 &16 &406.60 & 14&1142.10 &14 \\
\bottomrule
\end{tabular}
\end{table*}



\begin{figure}
     \centering
     \begin{subfigure}[b]{0.14\textwidth}
         \centering
         \includegraphics[width=\textwidth]{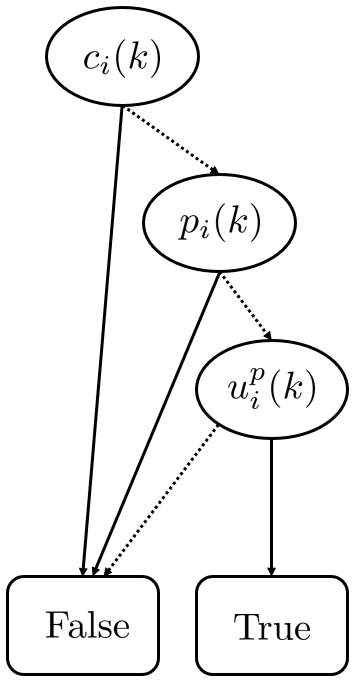}
         \caption{$p_i(k+1)$}
         \label{fig:BDD_VP}
     \end{subfigure}
     \begin{subfigure}[b]{0.165\textwidth}
         \centering
         \includegraphics[width=\textwidth]{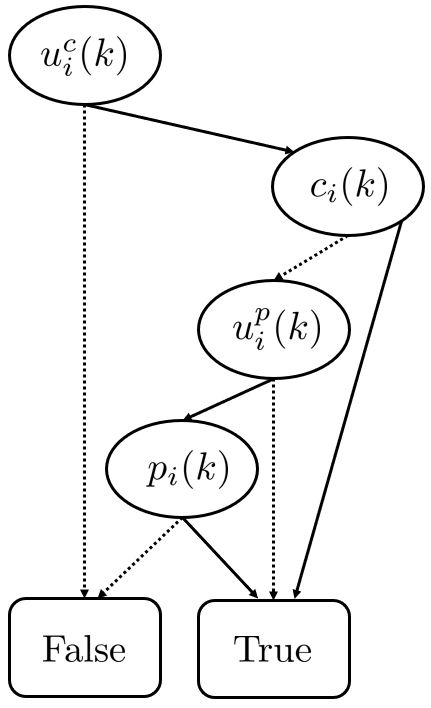}
         \caption{$c_i(k+1)$}
         \label{fig:Reduced_BDD_CF}
     \end{subfigure}
     \caption{Reduced BDDs for the intersection crossing example.}
     \label{fig:BDD}
     \vspace{-4mm}
\end{figure}

Then, we construct BDDs for each formula and execute the reduced form of the BDDs with uncertainty which is illustrated in Figure~\ref{fig:BDD}.
For the semi-tensor product-based approach with BCNs, we write state $x(k) {=} (\ltimes_{i=1}^4 p_{i}(k))$ $\ltimes (\ltimes_{i=1}^4 c_{i}(k))$. We write input $u(k) = (\ltimes_{i=1}^4 u^p_i(k)) \ltimes (\ltimes_{i=1}^4 u^c_i(k))$. The structure matrix $L$, which encodes \eqref{eq:VehPassSafe}-\eqref{eq:CameFirst}, is a $2^8 \times 2^{16}$ matrix where $8$ is the number of the states and $16$ is the number of states and inputs. We perform reachability analysis for the BCN using $x(k+1) = L \ltimes u(k) \ltimes x(k)$ for all possible combinations. For reachability analysis with logical zonotopes, we represent each uncertain variable in~\eqref{eq:VehPassSafe}-\eqref{eq:CameFirst} with a logical zonotope. We first compute the initial zonotope $\hat{\mathcal{R}}_0$ using Lemma \ref{lm:enclosepoints} which contains the initial and certain states.  
Then, using Theorem~\ref{thm:reach}, we compute the next reachable sets as logical zonotopes.
 
The execution time and the size of the reachable sets of the three approaches are presented in seconds in Table \ref{tab:exectimetrafic}. We note that reachability analysis using logical zonotopes provides better execution times when compared with reachability analysis with BDDs and semi-tensor products. Moreover, as the reachability analysis's time horizon increases, the reachability analysis's execution time with logical zonotopes increases slower than the other two methods. The logical zonotopes-based approach adds two extra points due to the over-approximation. 

\section{Conclusion}\label{sec:con}

This work proposes a novel set representation for binary vectors called logical zonotope. Logical zonotopes can represent up to $2^\gamma$ binary vectors using only $\gamma$ generators. We prove that applying different Minkowski logical operations to logical zonotopes always yields either exact solutions or over-approximations. 
In general, logical zonotopes allow for a variety of computationally efficient analyses of logical systems. In future work, we are investigating the potential of logical zonotopes for exploring the practical application of logical zonotopes in new use cases.





\balance


\bibliographystyle{IEEEtran}
{\small
\bibliography{IEEEabrv,ref}
}

\end{document}